\documentclass[letterpaper,10pt,conference]{ieeeconf}  
\IEEEoverridecommandlockouts                              
\usepackage{cite}
\usepackage{amsmath,amssymb,amsfonts}
\usepackage{graphicx}
\usepackage{xcolor}
\let\labelindent\relax
\usepackage{enumitem}
\usepackage[normalem]{ulem}
\makeatletter
\let\NAT@parse\undefined
\makeatother
\usepackage[
    colorlinks=true,
    linkcolor=blue,
    citecolor=blue,
    urlcolor=blue
]{hyperref}

\graphicspath{{figures/}}

\definecolor{myred}{rgb}{0.7,0.1,0.16}

\newtheorem{theorem}{Theorem}
\newtheorem{proposition}{Proposition}
\newtheorem{corollary}{Corollary}
\newtheorem{lemma}{Lemma}
\newtheorem{definition}{Definition}

\newtheorem{example}{Example}

\newcommand{\R}{\mathbb{R}}
\newcommand{\C}{\mathcal{C}}

\newcommand{\Lie}[2]{L_{#1}#2}

\catcode`\@=11
\def\downparenfill{$\m@th\braceld\leaders\vrule\hfill\bracerd$}
\def\overparen#1{\mathop{\vbox{\ialign{##\crcr\crcr \noalign{\kern0.4ex}
\downparenfill\crcr\noalign{\kern0.4ex\nointerlineskip}
$\hfil\displaystyle{#1}\hfil$\crcr}}}\limits}
\catcode`\@=12

\title{\LARGE \bf
Construction of Control Lyapunov–Barrier Functions\\ from CLF--CBF Pairs
}

\author{Bo Wang and Miroslav Krsti{\'c}
\thanks{Bo Wang is with the Department of Mechanical Engineering, The City College of New York, The City University of New York, New York, NY 10031, USA (e-mail: bwang1@ccny.cuny.edu). }
\thanks{Miroslav Krsti{\'c} is with the Department of Mechanical and Aerospace Engineering, University of California San Diego, La Jolla, CA 92093, USA (e-mail: mkrstic@ucsd.edu). }
}

\begin{document}

\maketitle
\thispagestyle{empty}
\pagestyle{empty}

\begin{abstract}
This paper studies the construction of control Lyapunov--barrier functions (CLBFs) from a given control Lyapunov function (CLF) $V$ and control barrier function (CBF) $h$. We consider functions of the form $W=F(V,h)$ that increase with the CLF value and do not increase with the barrier value, and show that the CLBF decrease condition is completely characterized by a nonnegative scalar weight that governs the relative contributions of the CLF and CBF. Because this weight depends only on $(V,h)$, the same value must satisfy the decrease condition at all states sharing the same CLF and CBF values, which leads to a joint-level-set admissibility condition. 
We then construct CLBFs from admissible weights through a first-order partial differential equation (PDE) and obtain explicit multiplicative and power-type families as special cases. We further identify an integrability obstruction showing that admissibility alone does not guarantee properness. Two nonlinear examples illustrate the constructions and demonstrate safe stabilization in cases where feedback based on the original CLF violates the safety constraint.
\end{abstract}

\section{Introduction}\label{sec:introduction}

Safe stabilization seeks to asymptotically stabilize a desired equilibrium while ensuring that the system trajectory remains within a prescribed safe set. Control Lyapunov functions (CLFs) and control barrier functions (CBFs) provide standard tools for these objectives, respectively characterizing stabilizability through a decrease condition and safety through a forward-invariance condition. A widely used approach enforces the CLF and CBF inequalities simultaneously, most notably through CLF--CBF quadratic programs (QPs) \cite{ames2017control,ames2019control}. Treating the CLF and CBF conditions separately provides design flexibility, but the two inequalities may conflict, and the existence of a control input satisfying both is not guaranteed. This has motivated the study of CLF--CBF compatibility \cite{dai2024verification}.

Another line of work constructs universal feedback laws that enforce the CLF and CBF inequalities directly. Existing approaches construct smooth safe stabilizing feedback from weighted centroids of feasible control sets~\cite{ong2019universal}, characterize pointwise CLF--CBF compatibility for single-input systems and derive closed-form universal feedback laws~\cite{wang2026universal}, 
and obtain smooth feedback satisfying an arbitrary number of strictly feasible affine inequalities through strictly convex minimization, together with neural-network approximations~\cite{mestres2026universal}. These methods address simultaneous constraint satisfaction and feedback regularity.

An alternative is to encode stability and safety within a single \textit{control Lyapunov--barrier function} (CLBF), thereby replacing two potentially competing inequalities with one decrease condition. An early CLBF construction for safe stabilization combined a separately designed CLF and CBF~\cite{romdlony2016stabilization}. Subsequent analysis identified restrictive existence conditions and showed that some assumptions underlying that construction cannot hold as originally stated~\cite{braun2020comment}. More recent work has developed converse results for stability and safety within a unified framework. Smooth converse Lyapunov--barrier theorems establish the existence of a single smooth Lyapunov--barrier function under appropriate stability-with-safety properties~\cite{meng2022smooth}, while related results connect safe stabilizability with the existence of CLBFs and compatible CLF--CBF pairs~\cite{mestres2025converse}. Recent developments further relate strictly compatible CLF--CBF pairs to single smooth Lyapunov--barrier functions~\cite{quartz2026converse}, provide explicit smooth-patching constructions for combining a given CLF and CBF under strict compatibility conditions~\cite{liu2025computing}, and characterize maximal CLBFs through Zubov equations under compatibility and additional regularity conditions~\cite{meng2026zubov}.

In this paper, we consider the construction of CLBFs of the form $W=F(V,h)$ from a given CLF $V$ and CBF $h$. Inverse optimal control provides an additional motivation for this form. Existing inverse-optimal designs use $L_gV$ for stabilization~\cite{krstic1998inverse} and $L_gh$ for safety filtering~\cite{krstic2024inverse}. A CLBF $W=F(V,h)$ instead provides a single $L_gW$ term, suggesting a natural route toward \textit{inverse-optimal safe stabilization}.
We address three questions: when does such a function satisfy the CLBF decrease condition, what can prevent it from being proper on the interior of the safe set, and how can $F$ be constructed? At the core of the analysis is a nonnegative scalar weight $\kappa(v,s):=-{F_s(v,s)}/{F_v(v,s)}$ for functions $F(V,h)$ that increase with the CLF value and do not increase
with the barrier value. This weight completely determines the CLBF decrease condition. Because $\kappa$ depends only on $(V,h)$, the same value must satisfy the decrease condition at every state sharing the same CLF and CBF values, yielding a joint-level-set characterization of admissible weights.

The main results are constructive. Given an admissible weight $\kappa$, the function $F$ is constructed by solving a first-order partial differential equation (PDE). This gives a general construction and yields explicit multiplicative families, as well as a simple power-type family under CLF--CBF compatibility and additional boundary-growth conditions. We further show that weight admissibility alone does not guarantee properness, and identify an integrability-based obstruction to properness for certain admissible weights. Two nonlinear examples illustrate the constructions, including a nonmultiplicative one, and demonstrate safe stabilization in cases where feedback based on the original CLF violates the safety constraint.

\section{Preliminaries and Problem Formulation}
\label{sec:preliminaries}

\textit{Notation}: Let $|\cdot|$ denote the Euclidean norm. For $S\subset\mathbb{R}^n$, $\partial S$ and $\operatorname{Int}(S)$ denote its boundary and interior, respectively. The class $\mathcal{K}$ consists of all continuous, strictly increasing functions $\alpha:\mathbb{R}_{\geq 0}\to\mathbb{R}_{\geq 0}$ satisfying $\alpha(0)=0$. The extended class $\mathcal{K}$, denoted by $\mathcal{K}^e$, consists of all continuous, strictly increasing functions $\alpha_h:\mathbb{R}\to\mathbb{R}$ satisfying $\alpha_h(0)=0$.

Consider the control-affine system
\begin{equation}
    \dot x=f(x)+g(x)u,
    \label{eq:system}
\end{equation}
where $x\in\R^n$ is the state, $u\in\R^m$ is the control input, and $f:\R^n\to\R^n$ and $g:\R^n\to\R^{n\times m}$ are locally Lipschitz. We assume $f(0) = 0$ so that the origin is an equilibrium of the unforced system. Let $h:\R^n\to\R$ be continuously differentiable and define
\begin{equation}
    \C:=\{x\in\R^n:h(x)\ge 0\}.
    \label{eq:safe-set}
\end{equation}
Assume that $\operatorname{Int}(\C):=\{x\in\R^n:h(x)>0\}$ is connected and contains the origin. Let $\mathcal D\subseteq\mathbb R^n$ be an open set such that $\C\subseteq\mathcal D$.

\begin{definition}[CLF]\rm 
A continuously differentiable, positive definite, and proper function $V:\mathcal{D}\to \mathbb{R}_{\ge 0}$ is a \textit{CLF} for \eqref{eq:system} on $\mathcal{D}$ if there exists a continuous and positive definite function $\alpha:\mathbb{R}_{\ge 0}\to \mathbb{R}_{\ge 0}$ such that for all $x\in\mathcal{D}\backslash \{0\}$
\begin{equation}\label{eq:defCLF}
    L_gV(x)=0 \implies L_fV(x) + \alpha(|x|)<0.
\end{equation}
If $\mathcal{D}=\mathbb{R}^n$, then $V$ is a global CLF for \eqref{eq:system}.
\end{definition}

\begin{definition}[CBF] \rm
A continuously differentiable function $h:\mathbb{R}^n\to \mathbb{R}$ is a \textit{CBF} with respect to the set $\mathcal{C}$ if there exists $\alpha_h\in\mathcal{K}^e$ such that
\begin{equation}\label{eq:defCBF}
    L_gh(x)=0 \implies L_fh(x) + \alpha_h(h(x))>0
\end{equation}
for all $x\in\mathbb{R}^n$.
\end{definition}

Different definitions of CLBFs have been proposed in the literature; see, e.g., \cite{romdlony2016stabilization,quartz2026converse}. Here, we use a CLBF defined on the open safe domain $\operatorname{Int}{(\mathcal{C})}$, with properness accounting for both approach to the safety boundary and escape to infinity, and with a decrease condition providing feedback stabilization \cite{tee2009barrier,meng2022smooth}.

\begin{definition}[CLBF]\label{def:CLBF}
A continuously differentiable function $W:\operatorname{Int}(\C)\to\R_{\geq 0}$ is a \textit{CLBF} for \eqref{eq:system} on $\operatorname{Int}(\C)$ if:
\begin{enumerate}
    \item $W(0)=0$ and $W(x)>0$ for every $x\in\operatorname{Int}(\C)\setminus\{0\}$;

    \item $W$ is proper on $\operatorname{Int}(\C)$; that is, for every $c\geq0$, the sublevel set $\Omega_c(W):=\{x\in\operatorname{Int}(\C):W(x)\leq c\}$ is a compact subset of $\operatorname{Int}(\C)$;

    \item for every $x\in\operatorname{Int}(\C)\setminus\{0\}$,
    \begin{equation}
        \Lie{g}{W}(x)=0
        \quad\implies\quad
        \Lie{f}{W}(x)<0.
        \label{eq:clbf-artstein}
    \end{equation}
\end{enumerate}
\end{definition}

Properness is equivalent to $W(x_i)\to\infty$ along every sequence in $\operatorname{Int}(\C)$ leaving all compact subsets of the domain. It therefore captures both approach to $\partial\C$ and, when $\operatorname{Int}(\C)$ is unbounded, escape to infinity. We use the strict decrease condition in \eqref{eq:clbf-artstein} without imposing a prescribed quantitative decay margin. For unconstrained input $u\in\R^m$, \eqref{eq:clbf-artstein} is the standard control-Lyapunov decrease condition for feedback stabilization \cite{sontag1989universal}.  
Consequently, any closed-loop trajectory initialized in $\operatorname{Int}(\C)$ along which $W$ is nonincreasing remains in the compact sublevel set and therefore remains safe.

\textit{Problem Statement}: Given a CLF $V$ and a CBF $h$, we seek to construct a continuously differentiable function $F$ of two variables, defined on an open set containing the values attained by $(V,h)$ on $\operatorname{Int}(\mathcal C)$, such that
\begin{equation}
    W(x)=F(V(x),h(x))
    \label{eq:composition}
\end{equation}
is a CLBF on $\operatorname{Int}(\mathcal C)$. To do so, we address three questions: 
1) when can a function of this form satisfy the CLBF decrease condition,
2) what can obstruct properness on the interior of the safe set, and
3) how can $F$ be constructed?

In the next section, we consider functions $F(V,h)$ that increase with the CLF value and do not increase with the barrier value. We show that their decrease condition is determined by a single nonnegative scalar weight.

\section{Scalar Weight Characterization}\label{sec:characterization}

\subsection{The Decrease Condition}

We first examine how the partial derivatives of $F$ enter the CLBF decrease condition \eqref{eq:clbf-artstein}. Define the joint value set of $(V,h)$ on $\operatorname{Int}(\C)$ as
\begin{equation}
    \mathcal{R}_{V,h}:=\left\{(V(x),h(x)):x\in\operatorname{Int}(\C)\right\}.
    \label{eq:joint-value-set}
\end{equation}
Consider a continuously differentiable function $F(v,s)$, defined on an open set containing $\mathcal{R}_{V,h}$, and suppose that
\begin{equation}
    F_v(v,s)>0, \qquad F_s(v,s)\leq0
    \label{eq:F-monotonicity}
\end{equation}
for every $(v,s)\in\mathcal{R}_{V,h}$. Thus, $F$ increases with the CLF value and does not increase with the barrier value.

The chain rule gives
\begin{subequations}
\begin{align}
    \Lie{f}{W}
    &=F_v(V,h)\Lie{f}{V}+F_s(V,h)\Lie{f}{h},
    \label{eq:LfW-chain}\\
    \Lie{g}{W}
    &=F_v(V,h)\Lie{g}{V}+F_s(V,h)\Lie{g}{h}.
    \label{eq:LgW-chain}
\end{align}
\end{subequations}
Since $F_v>0$, only the ratio of the two partial derivatives affects the CLBF decrease condition. Define the nonnegative scalar \textit{weight}
\begin{equation}
    \kappa(v,s):=-\frac{F_s(v,s)}{F_v(v,s)}\ge 0.
    \label{eq:weight}
\end{equation}
Then
\begin{subequations}
\begin{align}
    \Lie{f}{W}
    &=F_v(V,h)
    \left[\Lie{f}{V}-\kappa(V,h)\Lie{f}{h}\right],
    \label{eq:LfW-weight}\\
    \Lie{g}{W}
    &=F_v(V,h)
    \left[\Lie{g}{V}-\kappa(V,h)\Lie{g}{h}\right].
    \label{eq:LgW-weight}
\end{align}
\end{subequations}
Thus, $\kappa$ determines the relative weighting of the CBF and the CLF gradients in $\nabla W$, while $F_v$ provides only a positive scaling. In particular, $\kappa=0$ recovers the CLF decrease condition, whereas $\kappa>0$ introduces dependence on the barrier value.

The weight $\kappa(V(x),h(x))$ depends on $x$ only through $V$ and $h$. For each $(v,s)\in\mathcal{R}_{V,h}$, define the joint level set
\begin{equation}
    \Sigma(v,s):=\left\{x\in\operatorname{Int}(\C):V(x)=v,\ h(x)=s\right\}.
    \label{eq:joint-level-set}
\end{equation}
All states in $\Sigma(v,s)$ must use the same value $\kappa(v,s)$. Accordingly, define the admissible set
\begin{equation}
\mathcal A(v,s):=\left\{
k\geq 0:
\begin{array}{l}
L_gV(x)-kL_gh(x)=0\\[1mm]
\implies
L_fV(x)-kL_fh(x)<0,\\[1mm]
\forall x\in\Sigma(v,s)\setminus\{0\}
\end{array}
\right\}.
\label{eq:admissible-set}
\end{equation}
Thus, $\mathcal A(v,s)$ contains exactly the scalar weights that satisfy the decrease condition \eqref{eq:clbf-artstein} simultaneously at all states sharing the same pair $(v,s)$. 

\begin{lemma}[Scalar-Weight Characterization]
\label{lem:weight-characterization}
Suppose that \eqref{eq:F-monotonicity} holds and that $W=F(V,h)$ is positive definite and proper on $\operatorname{Int}(\C)$. Then $W$ is a CLBF if and only if {\setlength{\abovedisplayskip}{4pt}
\begin{equation}\label{eq:weight-selection}
    \kappa(v,s)\in\mathcal A(v,s), \quad \forall (v,s)\in \mathcal R_{V,h}.
\end{equation}}
\end{lemma}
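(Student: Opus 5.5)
Since positive definiteness and properness are assumed, items 1) and 2) of Definition~\ref{def:CLBF} hold automatically, so $W$ is a CLBF if and only if the decrease condition \eqref{eq:clbf-artstein} holds at every $x\in\operatorname{Int}(\C)\setminus\{0\}$. The plan is to rewrite \eqref{eq:clbf-artstein} pointwise using the weighted expressions \eqref{eq:LfW-weight}--\eqref{eq:LgW-weight}, and then to regroup the states according to the joint level sets $\Sigma(v,s)$.

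\emph{Pointwise reformulation.} Fix $x\in\operatorname{Int}(\C)\setminus\{0\}$ and set $(v,s)=(V(x),h(x))\in\mathcal R_{V,h}$. By \eqref{eq:F-monotonicity}, $F_v(v,s)>0$, so the factor $F_v(V,h)$ in \eqref{eq:LgW-weight} can be divided out. Hence $L_gW(x)=0$ holds if and only if $L_gV(x)-\kappa(v,s)L_gh(x)=0$. Because the same positive factor appears in \eqref{eq:LfW-weight}, $L_fW(x)<0$ holds if and only if $L_fV(x)-\kappa(v,s)L_fh(x)<0$. Consequently, \eqref{eq:clbf-artstein} at $x$ is equivalent to the implication in \eqref{eq:admissible-set} at $x$ with $k=\kappa(v,s)$. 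We also need $\kappa(v,s)\ge 0$, which follows from $F_s\le 0$ and $F_v>0$; this ensures the candidate $k$ lies in the ground set of $\mathcal A(v,s)$.

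\emph{Regrouping by joint level sets.} The sets $\Sigma(v,s)$ with $(v,s)\in\mathcal R_{V,h}$ partition $\operatorname{Int}(\C)$, since each $x$ lies in exactly $\Sigma(V(x),h(x))$. In addition, $\kappa$ is constant on each $\Sigma(v,s)$.

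For the ($\Rightarrow$) direction, take $(v,s)\in\mathcal R_{V,h}$ and any $x\in\Sigma(v,s)\setminus\{0\}$. The CLBF property combined with the pointwise equivalence gives the implication of \eqref{eq:admissible-set} with $k=\kappa(v,s)$. Since $x$ was arbitrary in $\Sigma(v,s)\setminus\{0\}$, we get $\kappa(v,s)\in\mathcal A(v,s)$.

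For the ($\Leftarrow$) direction, take any $x\neq 0$ in $\operatorname{Int}(\C)$. Then $x\in\Sigma(V(x),h(x))\setminus\{0\}$, and the membership $\kappa(V(x),h(x))\in\mathcal A(V(x),h(x))$ yields the weighted implication at $x$, which is \eqref{eq:clbf-artstein} at $x$.

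No step is a genuine obstacle. The only points requiring care are the use of strict positivity of $F_v$ on $\mathcal R_{V,h}$ (not merely nonnegativity), which justifies dividing out the scaling factor, and the bookkeeping showing that the quantifier ``for all $x\in\Sigma(v,s)\setminus\{0\}$'' in \eqref{eq:admissible-set} matches exactly the requirement over all of $\operatorname{Int}(\C)\setminus\{0\}$ under the partition.
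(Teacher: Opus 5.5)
Your proposal is correct and follows the same route as the paper: divide out the positive factor $F_v(V,h)$ in \eqref{eq:LfW-weight}--\eqref{eq:LgW-weight} to get a pointwise equivalence with the weighted implication, then use the fact that $\kappa$ is constant on each joint level set $\Sigma(v,s)$ to match the quantifier in \eqref{eq:admissible-set}. Your version is more explicit than the paper's. It spells out both directions separately and notes that $\kappa(v,s)\ge 0$ places $\kappa(v,s)$ in the ground set of $\mathcal A(v,s)$.
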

\vspace{4pt}

\begin{proof}
By \eqref{eq:LfW-weight}--\eqref{eq:LgW-weight} and $F_v(v,s)>0$, the CLBF decrease condition \eqref{eq:clbf-artstein} is equivalent to $[L_gV-\kappa(V,h)L_gh=0] \implies [L_fV-\kappa(V,h)L_fh<0]$ at every nonzero state. For a fixed $(v,s)$, the same value $\kappa(v,s)$ must satisfy this implication for every $x\in\Sigma(v,s)\setminus\{0\}$. By the definition of $\mathcal A(v,s)$, this is exactly \eqref{eq:weight-selection}.
\end{proof}

Thus, once a function $\kappa(v,s)$ is chosen, its admissibility can be checked directly in the state space. As the following example shows, the admissible set $\mathcal A(v,s)$ need not be an interval.

\begin{example}[Admissible Weights for a Linear System]\label{ex:1}
    Consider the linear system
    \begin{equation}\label{eq:linear-example-system}
        \dot x=Ax+Bu
    \end{equation}
    with
\begin{equation*}
    A=\begin{bmatrix}
        2&1&1\\
        -1&-1&-1\\
        -1&1&-1
    \end{bmatrix}
    \quad\text{and}\quad
    B=\begin{bmatrix}
        1\\0\\0
    \end{bmatrix}.
\end{equation*}
The matrix $A$ has the positive eigenvalue $2^{1/3}$. Let
\begin{equation*}
    V(x)=\frac{1}{2}(x_1^2+x_2^2+x_3^2),
    \qquad
    h(x)=1-x_1-x_2.
\end{equation*}
The corresponding Lie derivatives are $\Lie{f}{V}=2x_1^2-x_2^2-x_3^2$, $\Lie{g}{V}=x_1$, $\Lie{f}{h}=-x_1$, and $\Lie{g}{h}=-1$. Hence, $\Lie{g}{V}=0$ implies $\Lie{f}{V}=-x_2^2-x_3^2<0$ away from the origin, so $V$ is a CLF. Since $\Lie{g}{h}=-1$, $h$ is a CBF for the half-space $\C=\{x:x_1+x_2\leq1\}$. The identities $h(x)=s$ and $V(x)=v$ give $x_1+x_2=1-s$ and $x_1^2+x_2^2+x_3^2=2v$. The minimum of $V$ subject to $x_1+x_2=1-s$ is $(1-s)^2/4$, attained at $x_1=x_2=(1-s)/2$ and $x_3=0$. Therefore,
\begin{equation}
    \mathcal R_{V,h}
    =\left\{(v,s):s>0,\quad v\geq\frac{(1-s)^2}{4}\right\}.
    \label{eq:linear-example-R}
\end{equation}
Fix $(v,s)\in\mathcal R_{V,h}$. If $\Lie{g}{V}-k\Lie{g}{h}=0$, then $x_1=-k$, $x_2=1-s+k$, and $x_3^2 =2v-k^2-(1-s+k)^2$. At such a state,
\begin{equation}
    \Lie{f}{V}-k\Lie{f}{h}=2(k^2-v).
    \label{eq:linear-example-drift}
\end{equation}
If $0<s<1$, then $x_3^2\ge 0$ implies $2v\geq k^2+(1-s+k)^2>2k^2$. Thus, whenever the equality $\Lie{g}{V}-k\Lie{g}{h}=0$ can occur, $k<\sqrt v$, and \eqref{eq:linear-example-drift} is negative. Therefore,
\begin{equation}
    \mathcal A(v,s)=\R_{\geq0}, \qquad 0<s<1.
    \label{eq:linear-example-A-boundary}
\end{equation}
For $s\geq1$ and $(v,s)\neq(0,1)$, define
\begin{equation}
    k_+(v,s):=\frac{s-1+\sqrt{4v-(s-1)^2}}{2}.
    \label{eq:linear-example-kplus}
\end{equation}
The condition $x_3^2\ge 0$ defines an interval of feasible values of $k$ with upper endpoint $k_+(v,s)$. Combining with \eqref{eq:linear-example-drift} gives
\begin{equation}
    \mathcal A(v,s)
    =[0,\sqrt v)\cup(k_+(v,s),\infty),  \qquad s\geq1,
    \label{eq:linear-example-A-interior}
\end{equation}
with $\mathcal A(0,1)=\R_{\geq0}$. Thus, the admissible set can be disconnected even for a linear system with a quadratic CLF and a linear CBF.

A smooth selection is
\begin{equation}
    \kappa(v,s):=\frac{v}{s(1+v)}.
    \label{eq:linear-example-kappa}
\end{equation}
For $0<s<1$, admissibility follows directly from \eqref{eq:linear-example-A-boundary}. For $s\geq1$ and $(v,s)\neq(0,1)$, using $2\sqrt v\leq1+v$ gives $\kappa(v,s) \leq{\sqrt v}/{(2s)} <\sqrt v$, so $\kappa(v,s)$ lies in the lower branch of \eqref{eq:linear-example-A-interior}. Hence $\kappa(v,s)\in\mathcal A(v,s)$ for all $(v,s)\in\mathcal R_{V,h}$.
\hfill $\blacktriangle$
\end{example}

\subsection{An Obstruction to Properness}
\label{sec:properness-obstruction}

Since $V$ is a CLF, $0\in\mathcal A(v,s)$ for every $(v,s)\in\mathcal R_{V,h}$, so the admissible sets are nonempty. The decrease condition and properness, however, impose different requirements on the weight $\kappa$. The choice $\kappa=0$ is always admissible but introduces no dependence on the barrier value. More generally, a particular admissible selection may still grow too slowly as $s\to0$ to produce a proper function.

\begin{proposition}[Integrability Obstruction to Properness]
\label{prop:weight-growth-obstruction}
Suppose that there exist $0<v_0<v_1$ and $s_0>0$ such that $[v_0,v_1]\times(0,s_0]\subseteq\mathcal R_{V,h}$. Let $\kappa:[v_0,v_1]\times (0,s_0]\to\mathbb{R}_{\ge 0}$ be continuous. Define
\begin{equation}
    \kappa_{\max}(s):=\max_{v\in[v_0,v_1]}\kappa(v,s).
    \label{eq:kappa-max-selection}
\end{equation}
If
\begin{equation}
    \int_0^{s_0}\kappa_{\max}(s)\,\mathrm{d}s<\infty,
    \label{eq:kappa-max-integrable}
\end{equation}
then any continuously differentiable function $F$ satisfying \eqref{eq:F-monotonicity} and $-F_s/F_v=\kappa$ throughout $[v_0,v_1]\times (0,s_0]$ fails to make $W=F(V,h)$ proper on $\operatorname{Int}(\C)$.
\end{proposition}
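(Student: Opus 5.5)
The plan is to show that $F$ stays bounded along a suitable curve in the $(v,s)$-plane that approaches $s=0$. Then pick states realizing those values and obtain a sequence in $\operatorname{Int}(\C)$ that leaves every compact subset while $W$ stays bounded, which contradicts properness.

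The key step is a comparison estimate. We cannot integrate $F_s=-\kappa F_v$ at fixed $v$, because $F_v$ is not controlled. Instead we move along characteristic-type curves. For $s\in(0,s_0]$, define $v(s)$ by the ODE
\[
\frac{\mathrm{d}v}{\mathrm{d}s}=\kappa_{\max}(s),\qquad v(s_0)=v^\ast,
\]
that is, $v(s)=v^\ast-\int_s^{s_0}\kappa_{\max}(\sigma)\,\mathrm{d}\sigma$. Here $v^\ast\in(v_0,v_1]$ is chosen so that $v^\ast-\int_0^{s_0}\kappa_{\max}\ge v_0$. To arrange this, first shrink $s_0$ so that $\int_0^{s_0}\kappa_{\max}<v_1-v_0$, which is possible by \eqref{eq:kappa-max-integrable}, and take $v^\ast=v_1$. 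Then $v(s)\in[v_0,v_1]$ for all $s\in(0,s_0]$, so the curve stays in the rectangle and hence in $\mathcal R_{V,h}$. Note that $\kappa_{\max}$ is continuous by continuity of $\kappa$ on compact $v$-slices, so $v(\cdot)$ is $C^1$. Along the curve,
\[
\frac{\mathrm{d}}{\mathrm{d}s}F(v(s),s)=F_v\,\kappa_{\max}(s)+F_s=F_v\bigl(\kappa_{\max}(s)-\kappa(v(s),s)\bigr)\ge0,
\]
using $F_v>0$. Hence $F(v(s),s)\le F(v(s_0),s_0)=F(v_1,s_0)$ for all $s\in(0,s_0]$. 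Thus $F$ is bounded above along the curve as $s\to0^+$.

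Next we pass to the state space. Take $s_i\downarrow0$. Since $(v(s_i),s_i)\in\mathcal R_{V,h}$, there exist $x_i\in\operatorname{Int}(\C)$ with $V(x_i)=v(s_i)$ and $h(x_i)=s_i$. Then $W(x_i)=F(v(s_i),s_i)\le c:=F(v_1,s_0)$, so every $x_i$ lies in $\Omega_c(W)$. If $W$ were proper, $\Omega_c(W)$ would be a compact subset of $\operatorname{Int}(\C)$. By continuity $h$ would then attain a positive minimum on it, contradicting $h(x_i)=s_i\to0$. Hence $W$ is not proper.

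The main obstacle is the middle step: realizing that $F$ need not be bounded along vertical lines $v=\text{const}$, so one must choose a curve along which the directional derivative has a sign. Integrability of $\kappa_{\max}$ is exactly what keeps that curve inside the strip $[v_0,v_1]$. A minor point is shrinking $s_0$, which is harmless since every hypothesis restricts to smaller $s_0$.
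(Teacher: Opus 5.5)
Your proof is correct, and it takes a slightly different route from the paper's.

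The paper fixes $\sigma\in(0,\bar s]$ and solves the exact characteristic equation $\mathrm{d}\xi/\mathrm{d}\tau=\kappa(\xi,\tau)$ forward from $(v_0,\sigma)$. This requires local existence for a merely continuous right-hand side plus a continuation argument, with $\kappa_{\max}$ used only to keep the solution inside the strip. It then uses that $F$ is constant along this curve to get $F(v_0,\sigma)=F(\xi(\bar s),\bar s)$, bounds this by continuity of $F(\cdot,\bar s)$ on $[v_0,v_0+\eta]$, and takes states with $V(x_j)=v_0$ fixed.

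You instead use a single comparison curve of slope $\kappa_{\max}$, defined by an explicit quadrature, along which $F$ is monotone rather than constant. This buys three things:
\begin{enumerate}
\item no ODE existence theorem is needed;
\item you get the explicit bound $F(v_1,s_0)$;
\item it makes clear that only an integrable upper envelope of $-F_s/F_v$ matters: your argument works verbatim whenever $F_v>0$ and $-F_s/F_v\le\kappa_{\max}$ on the strip.
\end{enumerate}

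Your closing remark that $F$ ``need not be bounded along vertical lines'' is inaccurate in this setting. Since $F_v>0$ and $v(s)>v_0$ after your shrinking of $s_0$, you get $F(v_0,s)<F(v(s),s)\le F(v_1,s_0)$. So your curve also recovers the paper's boundedness along the vertical line $v=v_0$; what fails is only the attempt to obtain that bound by integrating directly at fixed $v$.
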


\begin{proof}
Since $\kappa_{\max}$ is nonnegative and integrable on $(0,s_0]$, there exists $\bar s\in(0,s_0]$ such that
\begin{equation}
    \eta:= \int_0^{\bar s}\kappa_{\max}(s)\,\mathrm{d}s <v_1-v_0.
    \label{eq:obstruction-eta}
\end{equation}
Fix $\sigma\in(0,\bar s]$, and consider
\begin{equation}
    \frac{\mathrm{d}\xi}{\mathrm{d}\tau} =  \kappa(\xi(\tau),\tau), \qquad \xi(\sigma)=v_0.
    \label{eq:obstruction-curve}
\end{equation}
By continuity of $\kappa$, a solution exists locally. As long as $\xi(\tau)\in[v_0,v_1]$, the definition of $\kappa_{\max}$ gives
\begin{equation}
    0\leq \frac{\mathrm{d}\xi}{\mathrm{d}\tau} \leq\kappa_{\max}(\tau),
\end{equation}
and hence
\begin{equation}
    v_0 \leq\xi(\tau) \leq v_0+\int_\sigma^\tau \kappa_{\max}(r)\,\mathrm{d}r \leq v_0+\eta <v_1.
    \label{eq:obstruction-curve-bound}
\end{equation}
Thus the solution cannot leave $[v_0,v_1]$ before reaching $\tau=\bar s$, and therefore it extends to the whole interval $[\sigma,\bar s]$.

Since $-F_s/F_v=\kappa$ and $F_v>0$, one has
\begin{equation}
    F_s(v,s)+\kappa(v,s)F_v(v,s)=0.
\end{equation}
throughout the considered region. Along \eqref{eq:obstruction-curve},
\begin{equation}
    \frac{\mathrm{d}}{\mathrm{d}\tau} F(\xi(\tau),\tau) = F_v(\xi(\tau),\tau) \frac{\mathrm{d}\xi}{\mathrm{d}\tau} +F_s(\xi(\tau),\tau) =0.
\end{equation}
Therefore,
\begin{equation}
    F(v_0,\sigma) = F(\xi(\bar s),\bar s).
    \label{eq:obstruction-constant}
\end{equation}
By \eqref{eq:obstruction-curve-bound}, $\xi(\bar s)\in[v_0,v_0+\eta]$. Since $F(\cdot,\bar s)$ is continuous, there exists a finite constant $c$ such that $F(v_0,\sigma)\leq c$ for all $\sigma\in(0,\bar s]$.

Now choose a sequence $\sigma_j\to0^+$. Since $(v_0,\sigma_j)\in\mathcal R_{V,h}$, there exists $x_j\in\operatorname{Int}(\C)$ such that $V(x_j)=v_0$ and $h(x_j)=\sigma_j$. Hence
\begin{equation}
    W(x_j)=F(v_0,\sigma_j)\leq c.
\end{equation}
If $W$ were proper, the points $x_j$ would all belong to a compact sublevel set of $W$ contained in $\operatorname{Int}(\C)$. Since $h>0$ on $\operatorname{Int}(\C)$ and $h$ is continuous, $h$ would have a positive minimum on this sublevel set. This contradicts $h(x_j)=\sigma_j\to0^+$. Therefore, $W$ cannot be proper.
\end{proof}

Proposition \ref{prop:weight-growth-obstruction} shows that if the weight $\kappa(v,s)$ is integrable in $s$ near the safety boundary, it cannot generate the boundary growth required for properness. The distinction can already be seen in the preceding linear example. An alternative admissible selection is
\begin{equation}
    \tilde{\kappa}(v,s):=\frac{v}{\sqrt{s}(1+v)}\in\mathcal{A}(v,s).
    \label{eq:linear-example-kappa-integrable}
\end{equation}
However, on any strip $[v_0,v_1]\times(0,s_0]\subseteq\mathcal R_{V,h}$, $\tilde{\kappa}_{\max}(s)$ is proportional to $1/\sqrt{s}$ and is therefore integrable at $s=0$. Proposition~\ref{prop:weight-growth-obstruction} shows that this selection cannot produce a proper CLBF. In contrast, the selection \eqref{eq:linear-example-kappa} has $1/s$-type growth and is not ruled out by the proposition.

\section{Construction of CLBFs}\label{sec:construction}

\subsection{Construction via Characteristics}
\label{sec:construction-characteristics}

The preceding section characterizes the weights that satisfy the decrease condition and shows that admissibility alone does not guarantee properness. We now consider how to construct $F$ from a given admissible weight $\kappa$. The construction requires regularity of $\kappa$ and well-defined characteristic curves reaching a prescribed reference line; positive definiteness and properness of $W=F(V,h)$ are then verified separately.

The relation \eqref{eq:weight} can be written as the first-order PDE
\begin{equation}
    F_s(v,s)+\kappa(v,s)F_v(v,s)=0.
    \label{eq:construction-pde}
\end{equation}
For each $(v,s)$, let $\xi(\tau;s,v)$ denote the solution of the characteristic equation
\begin{equation}
    \frac{\mathrm{d}\xi}{\mathrm{d}\tau} = \kappa(\xi,\tau),
    \quad
    \xi(s;s,v)=v.
    \label{eq:construction-characteristic}
\end{equation}
Along the solution, one has
\begin{equation}
\begin{aligned}
    \frac{\mathrm{d}}{\mathrm{d}\tau} F(\xi(\tau;s,v),\tau) &= \kappa(\xi(\tau;s,v),\tau) F_v(\xi(\tau;s,v),\tau) \\
    &\quad +F_s(\xi(\tau;s,v),\tau).
\end{aligned}
\label{eq:F-along-characteristics}
\end{equation}
Hence, $F$ is constant along each characteristic whenever it satisfies \eqref{eq:construction-pde}.

\begin{theorem}[Characteristic Construction]
\label{thm:characteristic-construction}
Let $\mathcal O\subset\R\times(0,\infty)$ be an open set containing $\mathcal R_{V,h}$, and let $\kappa\in C^1(\mathcal O)$ satisfy $\kappa(v,s)\geq0$ for all $(v,s)\in\mathcal R_{V,h}$. Fix a reference value $s_b>0$, and suppose that, for every $(v,s)\in\mathcal O$, the solution $\xi(\tau;s,v)$ of \eqref{eq:construction-characteristic} exists for all $\tau$ between $s$ and $s_b$ and satisfies $(\xi(\tau;s,v),\tau)\in\mathcal O$ throughout that interval. Let $\psi\in C^1(\R)$ satisfy $\psi'(\xi(s_b;s,v))>0$ for every $(v,s)\in\mathcal R_{V,h}$, and define
\begin{equation}
    F(v,s):=\psi\!\left(\xi(s_b;s,v)\right).
    \label{eq:construction-F}
\end{equation}
Then $F\in C^1(\mathcal O)$, satisfies \eqref{eq:construction-pde} on $\mathcal O$, and satisfies \eqref{eq:F-monotonicity} on $\mathcal R_{V,h}$. If, in addition, \eqref{eq:weight-selection} holds and $W=F(V,h)$ is positive definite and proper on $\operatorname{Int}(\C)$, then $W$ is a CLBF.
\end{theorem}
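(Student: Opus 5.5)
The plan is to use standard ODE theory for the characteristic flow of \eqref{eq:construction-characteristic}. Let $\Phi(\tau;s,v)=\xi(\tau;s,v)$ denote the flow of the nonautonomous scalar ODE $\mathrm{d}\xi/\mathrm{d}\tau=\kappa(\xi,\tau)$, started at time $s$ from the value $v$.

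\textbf{Step 1: regularity.} Since $\kappa\in C^1(\mathcal O)$, the right-hand side is locally Lipschitz, so solutions are unique. By assumption, the solution starting from any $(v,s)\in\mathcal O$ stays in $\mathcal O$ up to $\tau=s_b$. The theorem on differentiable dependence on initial time and initial condition then gives that $(v,s)\mapsto\xi(s_b;s,v)$ is $C^1$ on $\mathcal O$. Composing with $\psi\in C^1$ shows $F\in C^1(\mathcal O)$.

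\textbf{Step 2: the PDE.} The key identity is the flow property
\[
\xi(s_b;\,s+\delta,\,\xi(s+\delta;s,v))=\xi(s_b;s,v).
\]
Differentiating it in $\delta$ at $\delta=0$ gives $\partial_s\xi(s_b;s,v)+\kappa(v,s)\,\partial_v\xi(s_b;s,v)=0$. Multiplying by $\psi'$ yields $F_s+\kappa F_v=0$ on $\mathcal O$, which is \eqref{eq:construction-pde}.

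\textbf{Step 3: monotonicity.} This is the step needing care, namely showing $F_v>0$. The sensitivity $J(\tau)=\partial_v\xi(\tau;s,v)$ satisfies the variational equation
\[
\frac{\mathrm{d}J}{\mathrm{d}\tau}=\kappa_v(\xi(\tau),\tau)\,J,\qquad J(s)=1 .
\]
Hence
\[
J(s_b)=\exp\!\Big(\int_s^{s_b}\kappa_v(\xi(\tau),\tau)\,\mathrm{d}\tau\Big)>0,
\]
where the integral is understood with orientation, so the formula covers both $s<s_b$ and $s>s_b$. Together with $\psi'(\xi(s_b;s,v))>0$ on $\mathcal R_{V,h}$, this gives $F_v>0$ there. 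Then $F_s=-\kappa F_v\le 0$ on $\mathcal R_{V,h}$ because $\kappa\ge0$ there, so \eqref{eq:F-monotonicity} holds.

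\textbf{Step 4: CLBF.} Since $F_v>0$, the weight $-F_s/F_v$ of \eqref{eq:weight} is exactly the given $\kappa$. With \eqref{eq:weight-selection}, positive definiteness and properness assumed, Lemma~\ref{lem:weight-characterization} yields that $W$ is a CLBF.

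I expect the main obstacle to be Step 1, specifically justifying differentiability with respect to the initial time $s$ as well as $v$ uniformly over the interval between $s$ and $s_b$. I would handle this by citing the standard smooth-dependence theorem, applied to the autonomous extended system $(\dot\xi,\dot\tau)=(\kappa(\xi,\tau),1)$ on the open set $\mathcal O$.
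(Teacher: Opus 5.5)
Your proposal is correct and follows essentially the same route as the paper: $C^1$ dependence of the characteristic flow gives regularity, the flow property gives the PDE, the variational equation $\mathrm{d}J/\mathrm{d}\tau=\kappa_v J$ gives $F_v>0$, and Lemma~\ref{lem:weight-characterization} gives the decrease condition. The only cosmetic difference is that you differentiate the flow identity in the initial time directly, two-sided, so it also covers $s=s_b$; the paper instead states that $F$ is constant along characteristics and invokes \eqref{eq:F-along-characteristics}.
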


\begin{proof}
Since $\kappa\in C^1(\mathcal O)$, the solution $\xi(s_b;s,v)$ is continuously differentiable with respect to $(s,v)$. Therefore, \eqref{eq:construction-F} implies that $F\in C^1(\mathcal O)$.

For any $\tau$ between $s$ and $s_b$, the flow property of \eqref{eq:construction-characteristic} gives
\begin{equation}
    \xi\!\left(s_b;\tau,\xi(\tau;s,v)\right) = \xi(s_b;s,v).
\end{equation}
Therefore, following \eqref{eq:construction-F}, one has
\begin{equation}
    F\!\left(\xi(\tau;s,v),\tau\right) = \psi\!\left(\xi(s_b;s,v)\right) = F(v,s),
\end{equation}
so $F$ is constant along each characteristic. Using \eqref{eq:F-along-characteristics}, we obtain \eqref{eq:construction-pde}.
It remains to verify the monotonicity of $F$. Differentiating the characteristic equation with respect to the initial value $v$ gives
\begin{equation}
    \frac{\mathrm{d}}{\mathrm{d}\tau}
    \frac{\partial\xi(\tau;s,v)}{\partial v}
    =
    \kappa_v\!\left(\xi(\tau;s,v),\tau\right)
    \frac{\partial\xi(\tau;s,v)}{\partial v}
\end{equation}
with ${\partial\xi(s;s,v)}/{\partial v}=1$. Hence,
\begin{equation}
    \frac{\partial\xi(s_b;s,v)}{\partial v} = \exp\left( \int_s^{s_b} \kappa_v\!\left(\xi(\tau;s,v),\tau\right) \,\mathrm{d}\tau\right) >0.
\end{equation}
For $(v,s)\in\mathcal R_{V,h}$, \eqref{eq:construction-F} and the assumption on $\psi$ give
\begin{equation}
    F_v(v,s) = \psi'\!\left(\xi(s_b;s,v)\right) \frac{\partial\xi(s_b;s,v)}{\partial v} >0.
\end{equation}
Since $\kappa\geq0$ on $\mathcal R_{V,h}$, \eqref{eq:construction-pde} then yields $F_s=-\kappa F_v\leq0$ there. Thus \eqref{eq:F-monotonicity} holds on $\mathcal R_{V,h}$.

Finally, on $\mathcal R_{V,h}$, \eqref{eq:construction-pde} and $F_v>0$ give $-F_s/F_v=\kappa$. Therefore, if \eqref{eq:weight-selection} holds, Lemma~\ref{lem:weight-characterization} gives the CLBF decrease condition. Together with the assumed positive definiteness and properness of $W=F(V,h)$ on $\operatorname{Int}(\C)$, this proves that $W$ is a CLBF.
\end{proof}

Theorem~\ref{thm:characteristic-construction} separates the construction into two parts. The weight $\kappa$ determines the characteristic curves, and hence the ratio $-F_s/F_v$, while $\psi$ assigns values to these curves through the reference line $s=s_b$. Thus, different choices of $\psi$ can generate different functions $F$ associated with the same weight $\kappa$. Admissibility of $\kappa$ guarantees the decrease condition, while positive definiteness and properness of $W$ must still be checked separately. The PDE and characteristic construction in Theorem~\ref{thm:characteristic-construction} are reminiscent of Pomet's construction of time-varying Lyapunov functions for nonholonomic systems~\cite{pomet1992explicit}.

Sontag's formula~\cite{sontag1989universal} can be applied to the constructed CLBF $W=F(V,h)$, yielding an $L_gW$-type feedback that renders $\dot W<0$ away from the origin. Since $W$ has proper sublevel sets contained in the safe-set interior, the resulting closed-loop trajectories remain safe as long as forward solutions are well defined. The construction also provides a natural starting point for extending \textit{inverse-optimal feedback design}~\cite{krstic1998inverse,krstic2024inverse} to safe stabilization through the CLBF $W$. A complete inverse-optimal treatment is left for future work.

\begin{example}[Linear System (continued)]
    We return to the linear system of Example \ref{ex:1}. For the selected weight \eqref{eq:linear-example-kappa}, the characteristic equation is \begin{equation}
    \frac{\mathrm{d}\xi}{\mathrm{d}\tau} = \frac{\xi}{\tau(1+\xi)}.
\end{equation}
For $v>0$, separation of variables gives
\begin{equation}
    \xi(\tau;s,v)e^{\xi(\tau;s,v)} = \frac{\tau}{s}ve^v,
    \label{eq:linear-example-characteristic-invariant}
\end{equation}
and hence
\begin{equation}
    \xi(s_b;s,v)e^{\xi(s_b;s,v)} = \frac{s_b}{s}ve^v.
\end{equation}
Rather than solving explicitly for $\xi(s_b;s,v)$, choose $\psi(r):={re^r}/{s_b}$.
Since $v\geq0$, the characteristic remains nonnegative. Moreover, $\psi'(r)=e^r(1+r)/s_b>0$ for all $r\geq0$. Then \eqref{eq:construction-F} immediately gives
\begin{equation}
    F(v,s)=\frac{ve^v}{s} \quad \text{and} \quad W(x)=\frac{V(x)e^{V(x)}}{h(x)}.
    \label{eq:linear-example-F}
\end{equation}
Since $h>0$ on $\operatorname{Int}(\C)$ and $V$ is positive definite, $W$ is positive definite. It remains only to verify properness. From \eqref{eq:linear-example-R}, $v\geq(1-s)^2/4$, so $W\to\infty$ as $h=s\to0^+$. Moreover, $h(x)\leq1+2\sqrt{V(x)}$, and therefore
\begin{equation}
    W(x) \geq \frac{V(x)e^{V(x)}}{1+2\sqrt{V(x)}} \to\infty \quad\text{as}~|x|\to\infty.
\end{equation}
Thus $W$ is proper. Since $\kappa(v,s)\in\mathcal A(v,s)$, Theorem~\ref{thm:characteristic-construction} shows that $W$ is a CLBF.
\hfill $\blacktriangle$
\end{example}

\subsection{Closed-Form Constructions}
\label{sec:closed-form-constructions}

Theorem~\ref{thm:characteristic-construction} applies to general admissible weights. For weights of the separable form
\begin{equation}
    \kappa(v,s)=\gamma(v)\mu(s),
    \label{eq:separable-weight}
\end{equation}
the characteristic equation can be integrated explicitly. For $v>0$, assume $\gamma(v)>0$ and define
\begin{equation}
    G(v):=\int_{v_\star}^{v}\frac{\mathrm{d}r}{\gamma(r)},
    \qquad v_\star>0.
    \label{eq:G-definition}
\end{equation}
Along a characteristic,
\begin{equation}
    \frac{\mathrm{d}}{\mathrm{d}\tau} G(\xi(\tau;s,v)) = \mu(\tau),
\end{equation}
and therefore
\begin{equation}
    G(\xi(s_b;s,v))=G(v)+\int_s^{s_b}\mu(\tau)\,\mathrm{d}\tau.
    \label{eq:separable-characteristic}
\end{equation}
Choosing $\psi(r)=r$ in Theorem~\ref{thm:characteristic-construction} gives the closed-form construction
\begin{equation}
    F(v,s) = G^{-1}\!\left( G(v)+\int_s^{s_b}\mu(\tau)\,\mathrm{d}\tau \right),
    \label{eq:separable-F}
\end{equation}
whenever the right-hand side is well defined.

An important special case is $\gamma(v)=v$. Then
\begin{equation}
    F(v,s) = v\exp\left(\int_s^{s_b}\mu(\tau)\,\mathrm{d}\tau\right),
    \label{eq:multiplicative-F}
\end{equation}
with $F(0,s)=0$, and hence
\begin{equation}
    \boxed{W(x) = V(x)\exp\left(\int_{h(x)}^{s_b}\mu(\tau)\,\mathrm{d}\tau\right).}
    \label{eq:multiplicative-W}
\end{equation}

\begin{corollary}[Multiplicative Construction]
\label{cor:multiplicative-construction}
Let $V$ be a global CLF for \eqref{eq:system}, and let $h$ be a CBF with respect to $\C$. Let $s_b>0$ and let $\mu\in C^1((0,\infty);\R_{\geq0})$. Suppose that
\begin{equation}
    v\mu(s)\in\mathcal A(v,s), \quad \forall (v,s)\in\mathcal R_{V,h},
    \label{eq:multiplicative-admissibility}
\end{equation}
and that
\begin{equation}
    \inf_{s\in h(\operatorname{Int}(\C))} \int_s^{s_b}\mu(\tau)\,\mathrm{d}\tau >-\infty.
    \label{eq:multiplicative-radial}
\end{equation}
Assume further that
\begin{equation}
    \int_s^{s_b}\mu(\tau)\,\mathrm{d}\tau \to\infty \quad \text{as}~s\to0^+.
    \label{eq:multiplicative-boundary}
\end{equation}
Then \eqref{eq:multiplicative-W} is a CLBF on $\operatorname{Int}(\C)$.
\end{corollary}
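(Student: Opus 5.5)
The plan is to verify the three requirements of Definition~\ref{def:CLBF} for $W(x)=V(x)E(h(x))$, where $E(s):=\exp\bigl(\int_s^{s_b}\mu(\tau)\,\mathrm{d}\tau\bigr)$. The decrease condition will come from Lemma~\ref{lem:weight-characterization}, and properness will be checked directly.

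\textbf{Regularity, monotonicity and weight.} Set $F(v,s)=vE(s)$ on $\R\times(0,\infty)$. Since $\mu\in C^1$, the map $E$ is $C^1$ and positive, so $F\in C^1$. We have $F_v=E(s)>0$ and $F_s=-v\mu(s)E(s)\le 0$ for $v\ge0$, so \eqref{eq:F-monotonicity} holds on $\mathcal R_{V,h}$. Moreover $-F_s/F_v=v\mu(s)$. Alternatively, this is exactly \eqref{eq:multiplicative-F} obtained from Theorem~\ref{thm:characteristic-construction} with $\kappa=v\mu(s)$ and $\psi(r)=r$. A direct computation, however, avoids checking the global existence hypothesis on the characteristics.

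\textbf{Positive definiteness.} Since $V$ is positive definite and $E>0$, we get $W(0)=0$ and $W>0$ elsewhere on $\operatorname{Int}(\C)$.

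\textbf{Properness (the main step).} Let $m:=\inf_{s\in h(\operatorname{Int}(\C))}\int_s^{s_b}\mu$. By \eqref{eq:multiplicative-radial}, $m>-\infty$, so $E(h(x))\ge e^{m}>0$ and hence $W\ge e^{m}V$ on $\operatorname{Int}(\C)$.

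Fix $c\ge0$. Because $V$ is proper on $\R^n$, the set $\Omega_c(W)$ is contained in the compact set $\{V\le ce^{-m}\}$, and so it is bounded.

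Next we show that $h$ is bounded away from zero on $\Omega_c(W)$. Suppose instead that $x_j\in\Omega_c(W)$ with $h(x_j)\to0^+$. By boundedness we may pass to a subsequence with $x_j\to\bar x$, where $h(\bar x)=0$; in particular $\bar x\neq0$, since $h(0)>0$. Then $V(x_j)\to V(\bar x)>0$. Meanwhile \eqref{eq:multiplicative-boundary} gives $E(h(x_j))\to\infty$, so $W(x_j)\to\infty$, which contradicts $W(x_j)\le c$.

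Hence there is $\delta>0$ with $\Omega_c(W)\subseteq\{x:h(x)\ge\delta\}\cap\{V\le ce^{-m}\}$, a compact subset of $\operatorname{Int}(\C)$. Since $W$ is continuous on $\operatorname{Int}(\C)$, the sublevel set $\Omega_c(W)$ is relatively closed there. Because it stays inside this compact set, it is closed in $\R^n$ and therefore compact.

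The delicate point is the need for $V(\bar x)>0$ at boundary limit points. This is where positive definiteness of the global CLF, together with $0\in\operatorname{Int}(\C)$, is essential: otherwise the blow-up of $E$ could be cancelled by $V\to0$.

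\textbf{Decrease condition.} We have $\kappa(v,s)=v\mu(s)\in\mathcal A(v,s)$ by \eqref{eq:multiplicative-admissibility}. With positive definiteness and properness established, Lemma~\ref{lem:weight-characterization} yields \eqref{eq:clbf-artstein}, so $W$ is a CLBF.
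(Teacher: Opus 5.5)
Your proposal is correct and follows the paper's proof essentially step for step. The shared steps are the direct computation of $F_v$, $F_s$ and $-F_s/F_v=v\mu(s)$; the bound $W\ge e^{m}V$ from \eqref{eq:multiplicative-radial}; the boundary blow-up argument, which uses $V(\bar x)>0$ at limit points on $\partial\C$ (because $0\in\operatorname{Int}(\C)$); and Lemma~\ref{lem:weight-characterization} for the decrease condition. Your version is slightly more careful than the paper's in extracting a convergent subsequence and in justifying that $\Omega_c(W)$ is closed in $\R^n$, but the route is the same.
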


\begin{proof}
Let $F$ be defined in \eqref{eq:multiplicative-F}. Then
\begin{equation}
    F_v(v,s) = \exp\left( \int_s^{s_b}\mu(\tau)\,\mathrm{d}\tau \right)>0.
\end{equation}
For $(v,s)\in\mathcal R_{V,h}$, one has $v\geq0$, and hence
\begin{equation}
    F_s(v,s) = -v\mu(s) \exp\left( \int_s^{s_b}\mu(\tau)\,\mathrm{d}\tau \right) \leq0.
\end{equation}
Moreover, $-{F_s(v,s)}/{F_v(v,s)} = v\mu(s)$. Since the exponential factor is positive and $V$ is positive definite, $W=F(V,h)$ is positive definite.

It remains to verify properness. By \eqref{eq:multiplicative-radial}, there exists $M\in\R$ such that
\begin{equation}
    \int_{h(x)}^{s_b}\mu(\tau)\,\mathrm{d}\tau \geq M, \quad \forall x\in\operatorname{Int}(\C).
\end{equation}
Therefore, $W(x)\geq e^M V(x)$. Hence, for any $r\geq0$, $W(x)\leq r \implies V(x)\leq e^{-M}r$. Since $V$ is globally proper, the sublevel set $\{x:V(x)\leq e^{-M}r\}$ is compact.

Now suppose that $W(x_j)\leq r$ and $x_j\to x^\ast\in\partial\C$. Then $h(x_j)\to0^+$. Since $0\in\operatorname{Int}(\C)$, one has $x^\ast\neq0$, and hence $V(x_j)\to V(x^\ast)>0$. By \eqref{eq:multiplicative-boundary},
\begin{equation}
    \exp\left( \int_{h(x_j)}^{s_b}\mu(\tau)\,\mathrm{d}\tau \right) \to\infty,
\end{equation}
which implies $W(x_j)\to\infty$, contradicting $W(x_j)\leq r$. Thus no sublevel set of $W$ can approach $\partial\C$. Together with the compactness of the corresponding sublevel set of $V$, this shows that every sublevel set of $W$ is compact in $\operatorname{Int}(\C)$. Hence $W$ is proper.

Finally, by \eqref{eq:multiplicative-admissibility} and $-{F_s(v,s)}/{F_v(v,s)}=v\mu(s)$, Lemma~\ref{lem:weight-characterization} gives the CLBF decrease condition. Therefore, $W$ is a CLBF on $\operatorname{Int}(\C)$.
\end{proof}

Corollary~\ref{cor:multiplicative-construction} allows a general nonnegative function $\mu$. The special choice
\begin{equation}
    \mu(s)=\frac{c}{\alpha_h(s)}
    \label{eq:normalized-mu}
\end{equation}
with $c>0$ is particularly useful when the CLF and CBF are \textit{compatible} on $\operatorname{Int}(\C)\setminus\{0\}$, in the sense that for every $x\in\operatorname{Int}(\C)\setminus\{0\}$ there exists a common $u\in\R^m$ such that
\begin{subequations}
\begin{align}
    \Lie{f}{V}(x)+\Lie{g}{V}(x)u
    &<-\alpha(|x|),
    \label{eq:compatibility-clf}\\
    \Lie{f}{h}(x)+\Lie{g}{h}(x)u
    &>-\alpha_h(h(x)).
    \label{eq:compatibility-cbf}
\end{align}
\end{subequations}

\begin{proposition}[Construction under Compatibility]
\label{prop:compatibility-construction}
Let $V$ be a global CLF for \eqref{eq:system}, satisfying \eqref{eq:defCLF} with a continuous positive definite function $\alpha$, and let $h$ be a CBF with respect to $\C$, associated with $\alpha_h\in C^1(\R)\cap\mathcal K^e$. Assume that
\begin{equation}
    \bar h
    :=
    \sup_{x\in\operatorname{Int}(\C)} h(x)
    <\infty.
    \label{eq:h-bounded}
\end{equation}
For any $s_\star>0$, define
\begin{equation}
    \Psi(s)
    :=
    \int_s^{s_\star}
    \frac{\mathrm{d}\tau}{\alpha_h(\tau)},
    \qquad s>0,
    \label{eq:Psi-definition}
\end{equation}
and suppose that
\begin{equation}
    \Psi(s)\to\infty
    \quad
    \text{as}~s\to0^+.
    \label{eq:Psi-divergence}
\end{equation}
Suppose further that the CLF and CBF are compatible on $\operatorname{Int}(\C)\setminus\{0\}$ in the sense of \eqref{eq:compatibility-clf}--\eqref{eq:compatibility-cbf}. If the same constant $c>0$ in \eqref{eq:normalized-mu} satisfies
\begin{equation}
    cV(x)<\alpha(|x|),
    \quad
    \forall x\in\operatorname{Int}(\C)\setminus\{0\},
    \label{eq:cV-alpha}
\end{equation}
then
\begin{equation}
    W(x)
    :=
    V(x)\exp\left(
        c\bigl[\Psi(h(x))-\Psi(h(0))\bigr]
    \right)
    \label{eq:compatibility-W}
\end{equation}
is a CLBF on $\operatorname{Int}(\C)$.
\end{proposition}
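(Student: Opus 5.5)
The plan is to obtain the result as a special case of Corollary~\ref{cor:multiplicative-construction}, taking $\mu(s)=c/\alpha_h(s)$ as in \eqref{eq:normalized-mu} and reference value $s_b:=h(0)$. Since $0\in\operatorname{Int}(\C)$, we have $s_b=h(0)>0$. For $s>0$ the definition \eqref{eq:Psi-definition} gives
\begin{equation*}
\int_s^{h(0)}\mu(\tau)\,\mathrm{d}\tau = c\bigl[\Psi(s)-\Psi(h(0))\bigr].
\end{equation*}
Hence \eqref{eq:multiplicative-W} coincides exactly with \eqref{eq:compatibility-W}. Because $\alpha_h\in C^1(\R)\cap\mathcal K^e$, we have $\alpha_h(s)>0$ for $s>0$, so $\mu\in C^1((0,\infty);\R_{\ge0})$. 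It therefore remains to verify the three hypotheses \eqref{eq:multiplicative-admissibility}, \eqref{eq:multiplicative-radial}, and \eqref{eq:multiplicative-boundary} of the corollary.

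\textbf{Boundary and radial conditions.} Condition \eqref{eq:multiplicative-boundary} is immediate from \eqref{eq:Psi-divergence}. For \eqref{eq:multiplicative-radial}, note that $h(\operatorname{Int}(\C))\subseteq(0,\bar h]$ with $\bar h<\infty$ by \eqref{eq:h-bounded}. For $s\le h(0)$ the integral $\int_s^{h(0)}\mu\,\mathrm{d}\tau$ is nonnegative. For $s\in(h(0),\bar h]$ it is bounded below by
\begin{equation*}
-c\int_{h(0)}^{\bar h}\frac{\mathrm{d}\tau}{\alpha_h(\tau)},
\end{equation*}
which is finite because $\alpha_h$ is continuous and bounded away from zero on the compact interval $[h(0),\bar h]$. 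So the infimum is finite.

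\textbf{Admissibility.} This is the main step, and it is where compatibility and \eqref{eq:cV-alpha} enter. Fix $x\in\operatorname{Int}(\C)\setminus\{0\}$, set $v=V(x)$, $s=h(x)$, and $k=c\,v/\alpha_h(s)\ge0$. Suppose $L_gV(x)-kL_gh(x)=0$. Let $u$ be the common input from \eqref{eq:compatibility-clf}--\eqref{eq:compatibility-cbf}. The $u$-terms cancel, so
\begin{equation*}
L_fV-kL_fh = \bigl(L_fV+L_gVu\bigr)-k\bigl(L_fh+L_ghu\bigr).
\end{equation*}
Applying \eqref{eq:compatibility-clf} to the first bracket and \eqref{eq:compatibility-cbf} with $k\ge0$ to the second gives
\begin{equation*}
L_fV-kL_fh < -\alpha(|x|)+k\,\alpha_h(h(x)) = -\alpha(|x|)+cV(x) < 0,
\end{equation*}
where the last inequality is \eqref{eq:cV-alpha}. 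This argument is pointwise and uses only $(v,s)=(V(x),h(x))$. It therefore holds uniformly over every $x\in\Sigma(v,s)\setminus\{0\}$, which gives $v\mu(s)\in\mathcal A(v,s)$ for all $(v,s)\in\mathcal R_{V,h}$.

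The only subtlety is this uniformity over the joint level set, and it is handled because the weight depends on $x$ solely through $(V,h)$. With all three hypotheses verified, Corollary~\ref{cor:multiplicative-construction} shows that $W$ in \eqref{eq:compatibility-W} is a CLBF on $\operatorname{Int}(\C)$.
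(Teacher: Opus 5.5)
Your proposal is correct and follows the paper's proof essentially step for step. It applies Corollary~\ref{cor:multiplicative-construction} with $s_b=h(0)$ and $\mu=c/\alpha_h$, gets boundary divergence from \eqref{eq:Psi-divergence} and the lower bound from $\bar h<\infty$, and proves admissibility by cancelling the $u$-terms with the common compatible input and then invoking \eqref{eq:cV-alpha}; the only cosmetic difference is that you bound the integral by splitting at $h(0)$, whereas the paper uses that $\Psi$ is decreasing to reach the same bound $c[\Psi(\bar h)-\Psi(h(0))]$.
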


\begin{proof}
Apply Corollary~\ref{cor:multiplicative-construction} with $s_b=h(0)$ and $\mu$ given by \eqref{eq:normalized-mu}. Since $\alpha_h(s)>0$ for $s>0$, one has $\mu(s)\geq0$. By \eqref{eq:Psi-definition},
\begin{equation}
    \int_s^{h(0)}\mu(\tau)\,\mathrm{d}\tau
    =
    c\bigl[\Psi(s)-\Psi(h(0))\bigr].
    \label{eq:normalized-integral}
\end{equation}
Condition \eqref{eq:Psi-divergence} gives the required boundary divergence. Moreover, $\Psi$ is decreasing on $(0,\infty)$, and $h(x)\leq\bar h$ implies
\begin{equation}
    \Psi(h(x))-\Psi(h(0))
    \geq
    \Psi(\bar h)-\Psi(h(0)).
\end{equation}
Hence the integral in \eqref{eq:normalized-integral} is bounded below on $h(\operatorname{Int}(\C))$, so the two properness conditions of Corollary~\ref{cor:multiplicative-construction} hold.

It remains to verify admissibility. Fix $(v,s)\in\mathcal R_{V,h}$ and $x\in\Sigma(v,s)\setminus\{0\}$, and suppose that
\begin{equation}
    \Lie{g}{V}(x) - \frac{cv}{\alpha_h(s)} \Lie{g}{h}(x) = 0.
    \label{eq:prop2-zero-control-term}
\end{equation}
Since $v=V(x)$ and $s=h(x)$, let $\lambda:=cV(x)/\alpha_h(h(x))$. By compatibility, there exists a common control $u$ satisfying \eqref{eq:compatibility-clf}--\eqref{eq:compatibility-cbf}. Using \eqref{eq:prop2-zero-control-term},
\begin{align}
    \Lie{f}{V}(x)-\lambda\Lie{f}{h}(x)
    &= \bigl(\Lie{f}{V}(x)+\Lie{g}{V}(x)u\bigr) \nonumber\\
    &\qquad -\lambda\bigl(\Lie{f}{h}(x)+\Lie{g}{h}(x)u\bigr)
    \nonumber\\
    &< -\alpha(|x|) + \lambda\alpha_h(h(x)) \nonumber\\
    &= -\alpha(|x|)+cV(x) <0,
\end{align}
where the last inequality follows from \eqref{eq:cV-alpha}. Therefore, $cv/\alpha_h(s)\in\mathcal A(v,s)$ for every $(v,s)\in\mathcal R_{V,h}$. Corollary~\ref{cor:multiplicative-construction} then yields \eqref{eq:compatibility-W}.
\end{proof}

If $\alpha_h(s)=bs$ with $b>0$, then \eqref{eq:compatibility-W} reduces to
\begin{equation}
    \boxed{W(x) = V(x) \left( \frac{h(0)}{h(x)} \right)^{c/b}.}
    \label{eq:power-composition}
\end{equation}
Thus, under the assumptions of Proposition~\ref{prop:compatibility-construction}, the choice \eqref{eq:normalized-mu} gives the \textit{power-type CLBF} directly from Corollary~\ref{cor:multiplicative-construction}. 

In summary, Proposition~\ref{prop:compatibility-construction} provides the following construction procedure. Given $V$, $h$, $\alpha$, and $\alpha_h$, verify \eqref{eq:h-bounded} and \eqref{eq:Psi-divergence}, check CLF--CBF compatibility, and choose $c>0$ satisfying \eqref{eq:cV-alpha}. Then set $\mu(s)=c/\alpha_h(s)$ and construct $W$ from \eqref{eq:compatibility-W}; when $\alpha_h(s)=bs$, this reduces to \eqref{eq:power-composition}.

\section{Examples of CLBF Construction}\label{sec:examples}

This section gives two nonlinear examples. The first applies Proposition~\ref{prop:compatibility-construction} to construct a multiplicative CLBF from a compatible CLF--CBF pair. The second uses the characteristic construction of Theorem~\ref{thm:characteristic-construction} and illustrates the additional freedom available when multiple weights are admissible. In both examples, the conditions required by the corresponding construction are verified analytically before the feedback laws are introduced; numerical integration is used only to illustrate the resulting closed-loop trajectories. A feedback based on the original CLF stabilizes the origin but may violate the safety constraint, whereas the feedback based on the constructed CLBF achieves \textit{safe stabilization}.

\begin{example}[Multiplicative CLBF under Compatibility]
Consider the nonlinear system
\begin{align}
    \dot x_1=x_2-x_2^2u, \qquad \dot x_2=u.
    \label{eq:example1-system}
\end{align}
Define $z:=x_1+x_2+x_2^3/3$ and
\begin{equation*}
    V(x):=\frac{1}{2}\left(z^2+x_2^2\right),
    \quad
    h(x):=1-z^2.
\end{equation*}
The map $x\mapsto(z,x_2)$ is a global diffeomorphism, and in $z$ coordinate $\dot z=x_2+u$. The safe set is $\C=\{x:|z|\leq1\}$. The corresponding Lie derivatives are $\Lie{f}{V}=zx_2$, $\Lie{g}{V}=z+x_2$, $\Lie{f}{h}=-2zx_2$, and $\Lie{g}{h}=-2z$.

We verify the conditions of Proposition~\ref{prop:compatibility-construction} for the transformed system with state $(z,x_2)$; accordingly, the state norm in the CLF decay condition and in \eqref{eq:cV-alpha} is $|(z,x_2)|$. Function expressions and initial conditions written in $x$ retain the original coordinates $x=(x_1,x_2)$. Fix $c\in(0,1)$, choose $a\in(c/2,1/2)$, and let $\alpha(s):=as^2$. If $\Lie{g}{V}=0$, then $x_2=-z$, and hence $\Lie{f}{V}+\alpha(|(z,x_2)|)=-(1-2a)z^2<0$ away from the origin. Together with the positive definiteness and properness of $V$, this shows that $V$ is a global CLF. Moreover, $\Lie{g}{h}=0$ implies $z=0$, for which $\Lie{f}{h}=0$ and $h=1$. Thus, $h$ is a CBF associated with $\alpha_h(s)=s$.

It remains to verify compatibility. Let $r:=x_2+u$. The two inequalities \eqref{eq:compatibility-clf}--\eqref{eq:compatibility-cbf} are equivalent to
\begin{subequations}
\begin{align}
    (z+x_2)r &< (1-a)x_2^2-az^2,
    \label{eq:example1-compatibility-clf}\\
    2zr &< 1-z^2.
    \label{eq:example1-compatibility-cbf}
\end{align}
\end{subequations}
If $z(z+x_2)\leq0$, then $x_2^2\geq z^2$ whenever $z\neq0$, and since $a<1/2$ the right-hand side of \eqref{eq:example1-compatibility-clf} is positive away from the origin. Thus $r=0$ satisfies both inequalities. If $z(z+x_2)>0$, then $z$ and $z+x_2$ have the same sign, so choosing $r$ with the opposite sign and sufficiently large magnitude makes the left-hand sides of both \eqref{eq:example1-compatibility-clf} and \eqref{eq:example1-compatibility-cbf} negative with arbitrarily large magnitude. Hence the CLF and CBF are compatible on $\operatorname{Int}(\C)\setminus\{0\}$.

Since $V=(z^2+x_2^2)/2$ and $a>c/2$, one obtains $cV(x)<\alpha(|(z,x_2)|)$ for $x\neq0$, so \eqref{eq:cV-alpha} holds. Furthermore, $\bar h=1$, and with $s_\star=1$,
\begin{equation}
    \Psi(s)=\int_s^1\frac{\mathrm{d}\tau}{\tau}=\ln\frac{1}{s}\to\infty \quad    \text{as}~s\to0^+.
    \label{eq:example1-Psi}
\end{equation}
Proposition~\ref{prop:compatibility-construction} therefore gives
\begin{equation}
    \boxed{W(x) = \frac{z^2+x_2^2}{2(1-z^2)^c}.}
    \label{eq:example1-W}
\end{equation}

For comparison, consider the CLF damping feedback
\begin{equation}
    u_V=-k\Lie{g}{V}(x)=-k(z+x_2).
    \label{eq:example1-uV}
\end{equation}
With $k>1/4$, it yields $\dot V=zx_2-k(z+x_2)^2<0$ away from the origin, but need not preserve safety with respect to $\C$. For example, with $k=0.3$, $c=0.5$, and $x(0)=(-1,1.2)$, the resulting trajectory crosses the boundary $z=1$, as shown in Fig.~\ref{fig:example-1}.

For the CLBF \eqref{eq:example1-W}, define
\begin{equation}
    p(x):=z\left(1+\frac{2cV(x)}{h(x)}\right).
    \label{eq:example1-p}
\end{equation}
Then $\Lie{g}{W}=h^{-c}(p+x_2)$ and $\Lie{f}{W}=h^{-c}px_2$. Hence,
\begin{equation}
    u_W=-kh^c\Lie{g}{W}(x)=-k(p+x_2)
    \label{eq:example1-uW}
\end{equation}
gives $\dot W=h^{-c}\left[ px_2-k(p+x_2)^2 \right]<0$ for every nonzero $x\in\operatorname{Int}(\C)$ whenever $k>1/4$. Since $W$ is proper on $\operatorname{Int}(\C)$, the feedback \eqref{eq:example1-uW} asymptotically stabilizes the origin while preserving safety.
\hfill $\blacktriangle$
\end{example}

\begin{figure}[t]
    \centering
    \includegraphics[width=0.7\linewidth]{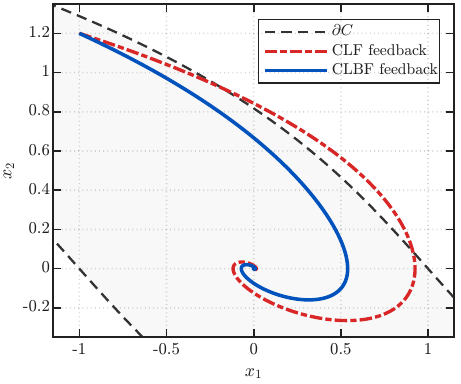}
    \caption{Closed-loop trajectories for Example~3. The CLF feedback
    crosses the safety boundary $z=1$, whereas the CLBF feedback remains
    in $\operatorname{Int}(\C)$ and converges to the origin.}
    \label{fig:example-1}
\end{figure}

\begin{example}[Construction via Characteristics]
    Consider the nonlinear system
\begin{equation}
    \dot x_1=x_2^3+u, \qquad \dot x_2=u.
    \label{eq:example2-system}
\end{equation}
Let
\begin{equation*}
    V(x):=\frac{1}{2}x_1^2+\frac{1}{4}x_2^4,
    \qquad
    h(x):=1-x_1^4.
\end{equation*}
The safe set is $\C=\{x:|x_1|\leq1\}$. The corresponding Lie derivatives are $\Lie{f}{V}=x_1x_2^3$, $\Lie{g}{V}=x_1+x_2^3$, $\Lie{f}{h}=-4x_1^3x_2^3$, and $\Lie{g}{h}=-4x_1^3$. If $\Lie{g}{V}=0$, then $x_2^3=-x_1$ and hence $\Lie{f}{V}=-x_1^2<0$ away from the origin. Thus, $V$ is a global CLF. Condition \eqref{eq:defCLF} holds with $\alpha(s):=s^6/(2(1+s^4))$. Moreover, $\Lie{g}{h}=0$ implies $x_1=0$, for which $\Lie{f}{h}=0$ and $h=1$. Therefore, $h$ is a CBF associated with $\alpha_h(s)=s$.

We first determine the admissible weights. For an arbitrary $k\geq0$, define $p_k:=x_1(1+4kx_1^2)$. Then
\begin{subequations}
\begin{align}
    \Lie{g}{V}-k\Lie{g}{h}&=x_2^3+p_k,
    \label{eq:example2-mixed-g}\\
    \Lie{f}{V}-k\Lie{f}{h}&=p_kx_2^3.
    \label{eq:example2-mixed-f}
\end{align}
\end{subequations}
If \eqref{eq:example2-mixed-g} vanishes, then $x_2^3=-p_k$, so \eqref{eq:example2-mixed-f} becomes $-p_k^2$. Since $k\geq0$, $p_k=0$ implies $x_1=0$, which together with $x_2^3=-p_k$ implies $x=0$. Hence
\begin{equation}
    \mathcal A(v,s)=\R_{\geq0},\quad\forall (v,s)\in\mathcal R_{V,h}.
    \label{eq:example2-all-admissible}
\end{equation}
This leaves considerable freedom in choosing $\kappa$.

A simple choice is $\kappa(v,s)=cv/s$ for any $c>0$. Taking $\mu(s)=c/s$ and $s_b=1$ in Corollary~\ref{cor:multiplicative-construction} yields the family
\begin{equation}
    \boxed{W_c(x)=\frac{V(x)}{h(x)^c}.}
    \label{eq:example2-power}
\end{equation}
Since $0<h(x)\leq1$ on $\operatorname{Int}(\C)$, so $\int_{h(x)}^1 c/\tau\,\mathrm{d}\tau\geq0$ and this integral diverges as $h(x)\to0^+$. Hence, following Corollary~\ref{cor:multiplicative-construction}, \eqref{eq:example2-power} is a CLBF. This family also shows that the sufficient decay condition \eqref{eq:cV-alpha} in Proposition~\ref{prop:compatibility-construction} is not necessary. Along $x_1=0$, one gets ${\alpha(|x|)}/{V(x)}={2x_2^2}/{(1+x_2^4)}\to 0$ as $x_2\to 0$. Thus, no fixed $c>0$ satisfies $cV(x)<\alpha(|x|)$ throughout $\operatorname{Int}(\C)\setminus\{0\}$, although \eqref{eq:example2-power} is a CLBF for every $c>0$.

The freedom in the admissible weight also gives nonmultiplicative constructions. Let $\beta\in C^1((0,\infty);\R_{\geq0})$ satisfy
\begin{equation}
    \int_s^1\beta(\tau)\,\mathrm{d}\tau\to\infty \quad \text{as} \quad s\to0^+.
    \label{eq:example2-beta-condition}
\end{equation}
Choosing $\kappa(v,s):=\beta(s)$, $s_b=1$, and $\psi(r):=r$ in Theorem~\ref{thm:characteristic-construction} gives the CLBF family
\begin{equation}
    W_\beta(x)=V(x)+\int_{h(x)}^1\beta(\tau)\,\mathrm{d}\tau,
    \label{eq:example2-Wbeta}
\end{equation}
which is positive definite and proper on $\operatorname{Int}(\C)$. Indeed, the integral term diverges as $h(x)\to0^+$, while $V(x)\to\infty$ whenever $|x_2|\to\infty$ with $|x_1|<1$. For example, the family $\beta_q(s):=s^{-q}$, $q\geq1$, yields CLBF family
\begin{equation}
    \boxed{W_q(x)=
    \begin{cases}
        V(x)-\ln h(x), & q=1,\\[1mm]
        V(x)+\dfrac{h(x)^{1-q}-1}{q-1}, & q>1.
    \end{cases}}
    \label{eq:example2-Wq}
\end{equation}

The weight may also depend on both $v$ and $s$. For instance, consider $\kappa(v,s):={\sqrt{1+v}}/{s}$. This weight is also admissible. Taking $s_b=1$ and $\psi(r)=r$, solving \eqref{eq:construction-characteristic} gives
\begin{equation}
    \boxed{W(x)=\left(\sqrt{1+V(x)}+\frac{1}{2}\ln\frac{1}{h(x)}\right)^2-1.}
    \label{eq:example2-W-coupled}
\end{equation}
This provides a nonmultiplicative construction whose weight depends on both $v$ and $s$.

For any of the preceding constructions, define
\begin{equation}
    p(x):=x_1\left(1+4\kappa(V,h)x_1^2\right).
    \label{eq:example2-p}
\end{equation}
The identity $F_s+\kappa F_v=0$ gives $\Lie{g}{W}=F_v(x_2^3+p)$ and $\Lie{f}{W}=F_vpx_2^3$. Hence, the CLBF feedback
\begin{equation}
    u_W=-\frac{l}{F_v(V,h)}\Lie{g}{W}=-l(p+x_2^3)
    \label{eq:example2-uW}
\end{equation}
with $l>1/4$ satisfies $\dot W=F_v\left[px_2^3-l(p+x_2^3)^2\right]<0$ for every $x\in\operatorname{Int}(\C)\setminus\{0\}$.

In contrast, the CLF feedback $u_V=-l\Lie{g}{V}=-l(x_1+x_2^3)$ with $l>1/4$ decreases $V$ but need not preserve safety with respect to $\C$. For the numerical comparison, we use \eqref{eq:example2-Wq} with $q=1$, namely $W=V-\ln h$. Both controllers use $l=0.3$ and $x(0)=(0.9,1.2)\in\operatorname{Int}(\C)$. As shown in Fig.~\ref{fig:example-2}, the CLF feedback drives the trajectory across the boundary $x_1=1$, whereas the CLBF feedback keeps the trajectory in $\operatorname{Int}(\C)$ and steers it toward the origin.
\hfill $\blacktriangle$
\end{example}

\begin{figure}[t]
    \centering
    \includegraphics[width=0.7\linewidth]{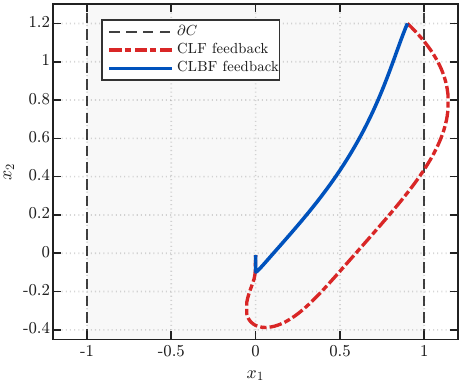}
    \caption{Closed-loop trajectories for Example~4. The CLF feedback
    drives the trajectory across the safety boundary $x_1=1$, whereas
    the CLBF feedback preserves $\operatorname{Int}(\C)$ and steers the
    state toward the origin.}
    \label{fig:example-2}
\end{figure}

\section{Conclusion}\label{sec:conclusion}

This paper studied the construction of CLBFs of the form $W=F(V,h)$ from a given CLF $V$ and CBF $h$. For functions with $F_v>0$ and $F_s\leq0$, we showed that the CLBF decrease condition is completely characterized by a nonnegative scalar weight and its admissibility on joint level sets of $(V,h)$. Given an admissible weight, we constructed $F$ through a first-order PDE and its characteristic curves, yielding general constructions as well as explicit multiplicative and power-type families under suitable conditions. We also showed that admissibility alone does not guarantee properness and established an integrability obstruction for weights that cannot generate the required growth near the safety boundary. Examples illustrated these constructions and demonstrated safe stabilization in cases where feedback based on the original CLF violates the safety constraint.

\bibliographystyle{IEEEtran}
\bibliography{mybibfile}

\end{document}